\newcommand{\dom}{\mathrm{dom}}
\newcommand{\codom}{\mathrm{codom}}
\newcommand{\arc}[1]{\mathop{\longrightarrow}\limits^{#1}}
\newcommand{\bra}[1]{\langle #1 |}
\newcommand{\ket}[1]{| #1\rangle}
\newcommand{\braket}[2]{\langle #1 | #2\rangle}
\newcommand{\trace}[1]{\mathrm{Tr}(#1)}
\newcommand{\prob}[2]{\mathrm{Pr}(#1~|~#2)}
\newcommand{\effect}[2]{\mathrm{Eff}[#1~|~#2]}
\newcommand{\matid}{\mathbf{1}}
\begin{document}
\title{Towards the Notion of an Abstract Quantum Automaton}
\author{Mizal Alobaidi\inst{1} \and
Andriy Batyiv\inst{2} \and 
Grygoriy Zholtkevych\inst{2}}
\institute{Tikrit University, Faculty of Computer Sciences and Mathematics,\\
P.O.~Box--42, Tikrit, Iraq,\\
\email{mizalobaidi@yahoo.com} \and
V.N. Karazin Kharkiv National University,
School of Mathematics\\
and Mechanics, 4, Svobody Sqr., Kharkiv, 61077, Ukraine,\\
\email{generatorglukoff@gmail.com, zholtkevych@univer.kharkov.ua}}
%
\maketitle
\begin{abstract}
The main goal of this paper is to give a rigorous mathematical description of systems for processing quantum information.
To do it authors consider abstract state machines as models of classical computational systems.
This class of machines is refined by introducing constrains on a state structure, namely, it is assumed that state of computational process has two components: a control unit state and a memory state. 
Then authors modify the class of models by substituting the deterministic evolutionary mechanism for a stochastic evolutionary
mechanism.
This approach can be generalized to the quantum case: one can replace  transformations of a classical memory with quantum
operations on a quantum memory.
Hence the authors come to the need to construct a mathematical model of an operation on the quantum memory.
It leads them to the notion of an abstract quantum automaton.
Further the authors demonstrate that a quantum teleportation process is described as evolutionary process for some abstract
quantum automaton.
\keywords{computational process, computational model, abstract state machine, finite-level quantum system, qubit, Kraus' family,
quantum ope\-ration, abstract quantum automaton, quantum teleportation}
\end{abstract}
\section*{Introduction}
The idea to build a device capable "to compute all that can be computed" had emerged a long time.
One can remember Blaise Pascal's Arithmetic Machine, Gottfried Wilhelm Leibniz's Stepped Reckoner, Charles Babbage's Difference
Engine and his Analytical Engine \cite{bib::white}.
But only in the thirties of last century, Alonzo Church \cite{bib::church}, Alan Mathison Turing \cite{bib::turing_machine}, and
Emil Leon Post \cite{bib::post} built mathematical models of the computational processes.
Although these models have different shapes each of them describes inherently the same class of processes.
The equivalence of the Turing's model and the Church's model, for example, was proved by A.M. Turing in 1937
\cite{bib::turing_thesis}.
In the late forties, hardware implementations of a universal computational system were developed and began to be used.
They are known now as computers.

The practice of using computers to solve real problems showed that besides answering the question "Can the problem be solved using computer?", an answer to the question "Do we have enough computational resources to solve the problem?" is important too.
Searches for methods to evaluate computational resources for computer-assisted problem solving led to the special scientific area
which is called theory of computational complexity (the brief historical overview one can see in \cite{bib::fortnow}).
Unfortunately, most important computational problems are complex ones.
In compliance with the generally accepted propositions of theoretical computing science the application field of classical
computers, i.e. hardware implementations of the universal Turing machine concept, is physically challenged by problems which have polynomial computational complexity.

However, modern science, technique, and technology are in need of me\-thods to solve problems whose complexity is higher than polynomial.
This situation stimulates research of non-classical approaches to computing, and quantum computing is one of these.

The idea to use quantum systems as computing devices appeared in the early eighties of the twentieth century.
The idea's authors considered it as a way to overcome computational complexity.
In the context Yuri Ivanovitch Manin's monograph
\footnote{
	The monograph's introduction was translated into English \cite{bib::manin_essays}
}
\cite{bib::manin} and Richard Phillips Feynman's paper \cite{bib::feynman} should be noted.
Considering the possibility of using quantum machines for solving complex problems of simulation Yu.I. Manin wrote (cited by \cite{bib::manin_essays}): "\dots we need a mathematical theory of quantum automata. Such a theory would provide us with
mathematical models of deterministic processes with quite unusual properties. One reason for this is that the quantum state space
has far greater capacity then the classical one: for a classical system with $N$ states, its quantum version allowing
superposition (entanglement) accommodates $e^N$ states".
In \cite{bib::manin_essays}, Yu.I. Manin also sets requirements to the mathematical theory of quantum automata: "The first
difficulty we must overcome is the choice of the correct balance between the mathematical and the physical principles. The quantum automaton has to be an abstract one: its mathematical model must appeal only to the general principles of quantum physics, without prescribing a physical implementation. Then the model of evolution is the unitary rotation in a finite dimensional Hilbert space,
and the decomposition of the system into its virtual parts corresponds to the tensor product decomposition of the state space
("quantum entanglement"). Somewhere in this picture we must accommodate interaction, which is described by density matrices and probabilities".

R.P. Feynman had a similar opinion \cite{bib::feynman,bib::feynman_computer}.

This paper is an attempt to construct a mathematical model of quantum automata  that fulfils requirements formulated by
Yu.I. Manin.
\section{Classical Computational Model}
In this section a mathematical model of a classical computational system is considered.
The approach was proposed by A.N.~Kolmogorov and V.A.~Uspensky \cite{bib::kolmogorov}.
Theory of abstract state machine (ASM) is the further development of the approach \cite{bib::gurevich_1,bib::gurevich_2}.
\subsection{Preliminary definitions}
\begin{definition}\label{dfn::algorithm}
Let $\mathfrak{A}$ denotes an {\bfseries algorithm}. It is determined by
\begin{itemize}
	\item[$-$]
		a set $\mathcal{C}(\mathfrak{A})$ of states{\rm;}
	\item[$-$]
		a subset $\mathcal{I}(\mathfrak{A})$ of $\mathcal{C}(\mathfrak{A})$ which elements are called {\bfseries initial} states;
	\item[$-$]
		a subset $\mathcal{T}(\mathfrak{A})$ of $\mathcal{C}(\mathfrak{A})$ which elements are called {\bfseries terminal} states;
	\item[$-$]
		a map $\tau_\mathfrak{A}\colon\mathcal{C}(\mathfrak{A})\rightarrow\mathcal{C}(\mathfrak{A})$ which defines one step of
		the computational process;
\end{itemize}
and the next condition
\begin{equation}
\mathcal{I}(\mathfrak{A})\bigcap\mathcal{T}(\mathfrak{A})=\emptyset.
\end{equation}
\end{definition}
Note that elements of the set $\mathcal{C}(\mathfrak{A})$ correspond to complete state descriptions of the computational process
which is defined by the algorithm $\mathfrak{A}$. 
\begin{definition}\label{dfn::run}
Let $\mathfrak{A}$ be an algorithm then a partial map $C\colon\mathbb{N}\dashrightarrow\mathcal{C}(\mathfrak{A})$
\footnote{
For two sets $A$ and $B$ by $f\colon A\dashrightarrow B$ a partial map from $A$ into $B$ is denoted.
For $a\in A$ by $f(a)\neq\emptyset$ the clause "$f(a)$ is defined" is denoted.
}
is called a {\bfseries run} of the algorithm if it satisfies the following conditions
\begin{itemize}
	\item[{\rm(i)}]
		$C(0)\in\mathcal{I}(\mathfrak{A});$
	\item[{\rm(ii)}]
		if $C(t)\neq\emptyset$ for some $t\in \mathbb{N}$ then $C(t')\neq\emptyset$ for all $t'\in \mathbb{N}$ such that $t'<t;$
	\item[{\rm(iii)}]
		if $C(t+1)\neq\emptyset$ for some $t\in \mathbb{N}$ then $C(t+1)=\tau_\mathfrak{A}(C(t))${\rm;} 
	\item[{\rm(iv)}]
		if $\emptyset\neq C(t)\in\mathcal{T}(\mathfrak{A})$ then $C(t+1)=\emptyset$.
\end{itemize}
\end{definition}
From this definition it follows immediately that the domain of an arbitrary run is the set $\mathbb{N}$ or some set
$0~..~T=\{t\in\mathbb{N}\mid t\leq T\}$, where $T$ is a non-negative integer.

In the first case the algorithm {\bf diverges} on the initial state $C(0)$ (this is denoted by $\mathfrak{A}(C(0))\uparrow$).

In the second case the algorithm {\bf converges} on the initial state $C(0)$ to $C(T)$(this is denoted by
$\mathfrak{A}(C(0))\downarrow C(T)$).
\subsection{Abstract state machines with stochastic behaviour}
Let's refine the Definition \ref{dfn::algorithm} and Definition \ref{dfn::run} for such algorithms that have sets of states with some special structure.

Let's start refining with the following auxiliary definitions.
\begin{definition}
Let $\mathcal{N}$ and $\mathcal{A}$ be finite sets of nodes and arcs respectively, $\dom$ and $\codom$ be a maps that associate
with arcs their initial and terminal nodes respectively, then the tuple $(\mathcal{N},\mathcal{A},\dom,\codom)$ is called a
{\bfseries directed multigraph}.
\end{definition}
\begin{definition}
Let $\mathcal{G}=(\mathcal{N},\mathcal{A},\dom,\codom)$ be a directed multigraph, then an alternating sequence
$\alpha=n_1,a_1,\dots,a_k,n_{k+1}$ of nodes and arcs, beginning and ending with a node, is called a {\bfseries walk} if for all
$s=1,\dots,k$ the next condition holds: $\dom(a_s)=n_s$ and $\codom(a_s)=n_{s+1}$.

In this case we shall use the notation: $n_1\arc{a_1}\dots\arc{a_k}n_{k+1}$.
\end{definition}
\begin{definition}
Let $\alpha=n_1\arc{a_1}\dots\arc{a_k}n_{k+1}$ be a walk in the directed multigraph
$\mathcal{G}=(\mathcal{N},\mathcal{A},\dom,\codom)$ and $n$ be its node, then we shall say that
\begin{itemize}
	\item[{\rm(i)}]
		$n$ is the {\bfseries initial node} of $\alpha$ {\rm(}it is denoted by $\dom(\alpha)=n${\rm)} if $n=n_1${\rm;}
	\item[{\rm(ii)}]
		$n$ is the {\bfseries terminal node} of $\alpha$ {\rm(}it is denoted by $\codom(\alpha)=n${\rm)} if $n=n_{k+1}${\rm;}
	\item[{\rm(iii)}]
		$\alpha$ {\bfseries traverses} $n$ if for some $s\in\{1,\dots,k+1\}$ the equality $n=n_s$ holds.
\end{itemize}
\end{definition}
\begin{definition}
Let $(\mathcal{N},\mathcal{A},\dom,\codom,n_0,F)$ be a tuple such that the tuple\linebreak
$(\mathcal{N},\mathcal{A},\dom,\codom)$ is a directed multigraph, $n_0$ is a fixed node {\rm(}it is called the initial node{\rm)},
$F$ is a fixed subset of nodes{\rm(}its elements are called terminal nodes{\rm)}. The tuple is called a {\bfseries control graph}
if the next conditions hold:
\begin{itemize}
	\item[{\rm(i)}]
		$n_0\notin F${\rm;}
	\item[{\rm(ii)}]
		for each $n\in F$ there is no arc with initial node equals $n${\rm;}
	\item[{\rm(iii)}]
		for each node $n$ there is a walk such that its initial node equals $n_0$, its  terminal node belongs to $F$, and it traverses $n$.
\end{itemize} 
\end{definition}

Note that for the control graph $(\mathcal{N},\mathcal{A},\dom,\codom,n_0,F)$ and each $n\in\mathcal{N}\setminus F$ the set  $\mathrm{Out}(n)=\{a\in\mathcal{A}\mid\dom(a)=n\}$ is not empty.

Let's assume that for an arbitrary algorithm $\mathfrak{A}$ the set of states has the next structure
$\mathcal{C}(\mathfrak{A})=\mathcal{N}(\mathfrak{A})\times\mathcal{S}(\mathfrak{A})$ where $\mathcal{N}(\mathfrak{A})$ is the
nodes set of some control graph $\mathcal{G}(\mathfrak{A})$ and $\mathcal{S}(\mathfrak{A})$ is some set of memory snapshots.

In this case suppose that the set of initial states is the next set
$\mathcal{I}(\mathfrak{A})=\{(n_0,S)\mid S\in\mathcal{S}(\mathfrak{A})\}$ and the set of terminal states is the following set
$\mathcal{T}(\mathfrak{A})=\{(n,S)\mid n\in F~\&~S\in\mathcal{S}(\mathfrak{A})\}$.

This supposition leads us to the next representation of the map $\tau_\mathfrak{A}$:
\begin{multline}\label{eqn::tau1}
\tau_\mathfrak{A}(n,S)=(\sigma_\mathfrak{A}(n,S),\gamma_\mathfrak{A}(n,S))\mbox{, where }\\
\sigma_\mathfrak{A}\colon\mathcal{N}(\mathfrak{A})\times\mathcal{S}(\mathfrak{A})\rightarrow\mathcal{N}(\mathfrak{A})\mbox{ and }
\gamma_\mathfrak{A}\colon\mathcal{N}(\mathfrak{A})\times\mathcal{S}(\mathfrak{A})\rightarrow\mathcal{S}(\mathfrak{A}). 
\end{multline} 

Suppose now that the map $\sigma_\mathfrak{A}$ has property of locality.
It means that for each $n\in\mathcal{N}(\mathfrak{A})\setminus F$ there exists a map
$h_n\colon\mathcal{S}(\mathfrak{A})\rightarrow\mathrm{Out}(n)$ and for each $a\in\mathcal{A}(\mathfrak{A})$
\footnote{By $\mathcal{A}(\mathfrak{A})$ is denoted the set of arcs for the control graph $\mathcal{G}(\mathfrak{A})$.}
there exists a map $g_a\colon\mathcal{S}(\mathfrak{A})\rightarrow\mathcal{S}(\mathfrak{A})$
such that the following equalities are true:
\begin{eqnarray}
\sigma_\mathfrak{A}(n,S)=&\codom(h_n(S));\label{eqn::sigma}\\
\gamma_\mathfrak{A}(n,S)=&g_{h_n(S)}(S)\label{eqn::gamma}.
\end{eqnarray}

From (\ref{eqn::tau1}), (\ref{eqn::sigma}), and (\ref{eqn::gamma}) it follows
\begin{equation}\label{eqn::tau2}
\tau_\mathfrak{A}(n,S)=(\codom(h_n(S)),g_{h_n(S)}(S)).
\end{equation}

Therefore, we can consider the computational process which is determined by the algorithm $\mathfrak{A}$ as a sequence of steps.
Each step begins when the current state is described by some control graph node $n$ and a memory snapshot $S$.
Then the map $h_n$ chooses the arc $a$ outgoing from the node $n$ depending on the snapshot $S$.
Finally, using the selected arc and the memory snapshot the new control graph node and the new memory snapshot
are determined in compliance with (\ref{eqn::tau2}).

Let's modify the computational model by rejecting the assumption about determinacy for the choosing process.
Definition \ref{dfn::asmr} describes this modification formally.
\begin{definition}\label{dfn::asmr}
Let $\mathcal{G}=(\mathcal{N},\mathcal{A},\dom,\codom,n_0,F)$ be a control graph, $\mathcal{S}$ be some set of memory snapshots, $\mathcal{P}=\{\prob{\cdot}{S,n}\mid n\in\mathcal{N}\setminus F,S\in\mathcal{S}\}$ be a family of probability distributions on $\mathcal{A}$, and $\mathcal{T}=\{g_a\mid a\in\mathcal{A}\}$ be a family of maps from $\mathcal{S}$ into itself then the tuple $(\mathcal{G},\mathcal{S},\mathcal{P},\mathcal{T})$ is called an {\bfseries abstract state machine with stochastic behaviour} if
the following condition holds
\begin{multline}
\mbox{for all }n\in\mathcal{N},S\in\mathcal{S},\mbox{ and }a\in\mathcal{A}\\
\prob{a}{S,n}=0\mbox{ follows from }a\notin\mathrm{Out}(n). 
\end{multline}
\end{definition}

Dynamics of such machines is determined by the next definition.
\begin{definition}\label{dfn::rrun}
Let $(\mathcal{G},\mathcal{S},\mathcal{P},\mathcal{T})$ be an abstract state machine with stochastic behaviour,
$\mathcal{G}=(\mathcal{N},\mathcal{A},\dom,\codom,n_0,F)$ be its control graph then a partial map $C\colon\mathbb{N}\dashrightarrow\mathcal{N}\times\mathcal{A}$ is called a {\bfseries run} of the machine if it satisfies the
following conditions
\begin{itemize}
	\item[{\rm(i)}]
		$C(0)=(n_0,S)\mbox{, where }S\in\mathcal{S}${\rm;}
	\item[{\rm(ii)}]
		if $C(t)\neq\emptyset$ for some $t\in \mathbb{N}$ then $C(t')\neq\emptyset$ for all $t'\in \mathbb{N}$ such that $t'<t$
		{\rm;}
	\item[{\rm(iii)}]
		if $\emptyset\neq C(t+1)=(n',S')$ for some $t\in \mathbb{N}$ and $C(t)=(n,S)$ then there exists $a\in\mathrm{Out}(n)$ such
		that $\prob{a}{n,S}>0$, $n'=\codom(a)$, and $S'=g_a(S)${\rm;} 
	\item[{\rm(iv)}]
		if $\emptyset\neq C(t)=(n,S)$ for $n\in F$ and $S\in\mathcal{S}$ then $C(t+1)=\emptyset$.
\end{itemize}
\end{definition}

Below such machines will be generalised for the quantum case.
\section{Mathematical Model of Finite-Level Quantum Systems}
In the section the model of quantum systems with finite quantity of levels (finite-level quantum systems) is described.
It is based on the approaches set forth in the works \cite{bib::holevo,bib::nielsen}.
\subsection{Postulates of finite-level quantum systems}
The postulates of finite-level quantum systems fix basic notions which are used to construct mathematical models for the systems.

{\bfseries Postulate 1:} an $n$-dimensional Hilbert space $\mathcal{H}_n$ is associated to any quantum physical system with $n$
levels.
This space is known as the state space of the system.
The system is completely described by its pure state, which is a one-dimensional subspace of the state space.
This subspace is uniquely represented by the ortho-projector $\ket{\psi}\bra{\psi}$ on a vector $\ket{\psi}$ which generates the subspace.

In contrast to pure states mixed states are used to describe quantum systems whose state is not completely known.

Rather more detailed suppose we know that a quantum system is in one of a number of states
$\{\ket{\psi_k}\bra{\psi_k}~\colon~k=1,\dots,m\}$ with respective probabilities $\{p_k\colon k=1,\dots,m\}$. 
We shall call $\{p_k,\ket{\psi_k}\bra{\psi_k}~\colon~k=1,\dots,m\}$ an ensemble of pure states.
The density operator for the system is defined by the equation
\[ 
\rho=\sum\limits_{k=1}^m p_k\ket{\psi_k}\bra{\psi_k}.   
\]

We identify mixed states with density operators
\footnote{The set of all density operators on the space $\mathcal{H}_n$ is denoted by $\mathfrak{S}_n$}.
Evidently, that each density operator is a non-negative defined operator which trace is equal to unit.
It is known that the inverse statement is true: a non-negative defined operator, which trace is equal to unit, is a density
operator \cite{bib::holevo}.

Of course, a one-dimensional ortho-projector is a density operator.
The set of all density operators is convex and its subset of one-dimensional ortho-projectors is the subset of its extreme points \cite{bib::holevo}.
This allows to consider pure states as indecomposable states.

{\bfseries Postulate 2:} the state space of a composite physical system is the tensor product of the state spaces of the component physical systems.
Moreover, if we have systems indexed by $k=1,\dots,m$, and the state of the system with number $k$ is described by the density operator $\rho_k$, then the joint state of the total system before any interactions is $\rho_1\otimes\cdots\otimes\rho_m$.

{\bfseries Postulate 3:} the evolution of a closed quantum system is described by a unitary transformation. That is, the state $\ket{\psi}\bra{\psi}$ of the system at time $t_1$ is related to the state $\ket{\psi '}\bra{\psi '}$ of the system at time $t_2$
by a unitary operator $U$ which depends only on the times $t_1$ and $t_2$,
\[
\ket{\psi '}\bra{\psi '}=U\ket{\psi}\bra{\psi}U^\ast.
\] 

If we have an ensemble of pure states of the system which is described by the density operator $\rho$ at time $t_1$ then the density operator $\rho '$ of the system at time $t_2$ can be calculated by the formula
\[
\rho '=U\rho U^\ast.
\]

{\bfseries Postulate 4:} quantum measurements are described by an indexed finite family $\mathcal{K}=\{K(x)\colon x\in X\}$ of Kraus' operators, where $X$ is a finite set.
These are operators acting on the state space of the system being measured.
The index $x$ refers to the measurement outcome that may occur in the experiment.
If the state of the quantum system is described by the density operator $\rho$ immediately before the measurement then the
probability that result $x$ occurs is given by the following formula
\footnote{
	By $\trace{\cdot}$ the usual operator trace is denoted
}
\begin{equation}\label{frm::prob}
\prob{x}{\rho}=\trace{\rho K(x)^\ast K(x)}
\end{equation}
and the state of the system immediately after the measurement is described by the density operator
\begin{equation}\label{frm::effect}
\effect{\rho}{x}=\frac{K(x)\rho K(x)^\ast}{\trace{\rho K(x)^\ast K(x)}}
\end{equation}
Any Kraus' family $\mathcal{K}=\{K(x)\colon x\in X\}$ satisfies the {\bfseries completeness condition}
\begin{equation}\label{frm::comp}
\sum\limits_{x\in X}K(x)^\ast K(x)=\matid,
\end{equation}
which ensures correctness of the definitions given by formulas (\ref{frm::prob}) and (\ref{frm::effect}).
\subsection{Measurements and isometric operators. Quantum operations}
Postulate 3 and Postulate 4 describe two different ways of changing a system state.
It looks non-naturally.
Hence, we can set the problem: find a unified description for evolutions and measurements of a finite-level quantum system.

To solve the problem let's introduce for a state space $\mathcal{H}_n$ of an $n$-level quantum system and a finite set $X$
operators $J(x)\colon\mathcal{H}_n\rightarrow\mathcal{H}_n\otimes l^2(X)$ by the formula
\begin{equation}\label{eqn::jop}
J(x)\ket{\psi}=\ket{\psi}\otimes\ket{x},
\end{equation}
where $\ket{x}\in l^2(X)$ such that $\ket{x}(\cdot)=\delta(x,\cdot)$.

Properties of operators from the family $\{J(x)\colon x\in X\}$ are established by the next proposition, which is proved by the
direct calculation.
\begin{proposition}
Let $\mathcal{H}_n$ be a state space of an $n$-level quantum system and $X$ be a finite set, then the operators family
$\{J(x)\colon x\in X\}$ defined by formula {\rm(\ref{eqn::jop})} satisfies the next identities 
\begin{eqnarray}
&J(x)^\ast\sum\limits_{x'\in X}(\ket{\psi(x')}\otimes\ket{x'})=\ket{\psi(x)}\\
&J(x')^\ast J(x'')=\delta(x',x'')\cdot\matid\\
&J(x')J(x'')^\ast =\matid\otimes\ket{x'}\bra{x''}
\label{frm::prjdelta}
\end{eqnarray}
\end{proposition}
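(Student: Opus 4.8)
The plan is to first obtain an explicit formula for the adjoint operator $J(x)^\ast\colon\mathcal{H}_n\otimes l^2(X)\rightarrow\mathcal{H}_n$, since all three identities are immediate consequences of such a formula together with the orthonormality relation $\braket{x'}{x''}=\delta(x',x'')$ for the standard basis of $l^2(X)$ (which holds because $\ket{x}(\cdot)=\delta(x,\cdot)$).

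To compute the adjoint, I would start from its defining property: for every $\ket{\psi}\in\mathcal{H}_n$ and every simple tensor $\ket{\chi}\otimes\ket{x''}\in\mathcal{H}_n\otimes l^2(X)$,
\[
\braket{J(x)^\ast(\chi\otimes x'')}{\psi}=\braket{\chi\otimes x''}{J(x)\psi}=\braket{\chi\otimes x''}{\psi\otimes x}.
\]
Expanding the right-hand side as the product of the inner products in the two tensor factors gives $\braket{\chi}{\psi}\braket{x''}{x}=\delta(x,x'')\braket{\chi}{\psi}$. Since this equality holds for all $\ket{\psi}$, I conclude that $J(x)^\ast(\ket{\chi}\otimes\ket{x''})=\delta(x,x'')\ket{\chi}$, and by linearity this extends to the whole space $\mathcal{H}_n\otimes l^2(X)$. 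This is the only genuine computation in the argument.

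With this formula in hand, each identity follows by direct substitution. For the first, applying $J(x)^\ast$ term by term to $\sum_{x'\in X}(\ket{\psi(x')}\otimes\ket{x'})$ yields $\sum_{x'\in X}\delta(x,x')\ket{\psi(x')}=\ket{\psi(x)}$. For the second, I compute $J(x')^\ast J(x'')\ket{\psi}=J(x')^\ast(\ket{\psi}\otimes\ket{x''})=\delta(x',x'')\ket{\psi}$ for arbitrary $\ket{\psi}$, whence $J(x')^\ast J(x'')=\delta(x',x'')\cdot\matid$. For the third, I evaluate both sides on a simple tensor $\ket{\chi}\otimes\ket{x'''}$: the left side gives $J(x')\bigl(\delta(x'',x''')\ket{\chi}\bigr)=\delta(x'',x''')(\ket{\chi}\otimes\ket{x'})$, while $(\matid\otimes\ket{x'}\bra{x''})(\ket{\chi}\otimes\ket{x'''})=\delta(x'',x''')(\ket{\chi}\otimes\ket{x'})$ as well, and since the two operators agree on a spanning set of $\mathcal{H}_n\otimes l^2(X)$ they coincide.

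There is no real obstacle here; the proposition is explicitly announced as provable ``by direct calculation,'' and essentially all of its content is packaged into the adjoint formula derived above. The only point that demands care is fixing the inner-product convention (conjugate-linear in the first slot) and invoking the orthonormality of $\{\ket{x}\colon x\in X\}$ consistently, so that each Kronecker delta is attached to the correct pair of indices.
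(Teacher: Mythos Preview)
Your proof is correct and matches the paper's approach: the paper does not give an explicit argument for this proposition, stating only that it ``is proved by the direct calculation,'' and your computation of the adjoint $J(x)^\ast(\ket{\chi}\otimes\ket{x''})=\delta(x,x'')\ket{\chi}$ followed by term-by-term verification of the three identities is precisely such a direct calculation.
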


Now using a Kraus' family $\mathcal{K}=\{K(x)\colon x\in X\}$ for some measurement let's define an operator
$W_\mathcal{K}\colon\mathcal{H}_n\rightarrow\mathcal{H}_n\otimes l^2(X)$ by the formula
\begin{equation}\label{eqn::iso}
W_\mathcal{K}\ket{\psi}=\sum\limits_{x'\in X}(K(x')\ket{\psi}\otimes\ket{x'}).
\end{equation}
\begin{proposition}\label{prp::kraus2iso}
Let $\mathcal{K}=\{K(x)\colon x\in X\}$ be a Kraus' family, and $W_\mathcal{K}$ be the operator that is built by the formula {\rm(\ref{eqn::iso})}, then $W_\mathcal{K}$ is an isometric operator and the next identities hold for all $x\in X$:
\begin{equation}\label{eqn::krestore}
K(x)=J(x)^\ast W_\mathcal{K}
\end{equation}
\end{proposition}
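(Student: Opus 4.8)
The plan is to establish both claims by direct computation, drawing on the completeness condition (\ref{frm::comp}) for the isometry and on the first identity of the preceding proposition for the reconstruction formula (\ref{eqn::krestore}). Since $W_\mathcal{K}\ket{\psi}=\sum_{x'\in X}(K(x')\ket{\psi}\otimes\ket{x'})$ is linear and the target space carries the usual tensor-product inner product, everything reduces to manipulating inner products and invoking the orthonormality of $\{\ket{x}\colon x\in X\}$ in $l^2(X)$.

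First I would show that $W_\mathcal{K}$ is isometric, i.e.\ that $W_\mathcal{K}^\ast W_\mathcal{K}=\matid$, which is equivalent to verifying that $W_\mathcal{K}$ preserves inner products. For arbitrary $\ket{\psi},\ket{\phi}\in\mathcal{H}_n$ I would expand
\[
\braket{W_\mathcal{K}\psi}{W_\mathcal{K}\phi}=\sum_{x',x''\in X}\braket{K(x')\psi}{K(x'')\phi}\,\braket{x'}{x''}.
\]
Because $\braket{x'}{x''}=\delta(x',x'')$, the double sum collapses to $\sum_{x'\in X}\bra{\psi}K(x')^\ast K(x')\ket{\phi}$. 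Pulling the (finite) sum inside the inner product and applying the completeness condition (\ref{frm::comp}) then gives $\bra{\psi}\matid\ket{\phi}=\braket{\psi}{\phi}$, which is exactly the isometry property.

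Next I would verify (\ref{eqn::krestore}) by applying $J(x)^\ast W_\mathcal{K}$ to a generic $\ket{\psi}$. This produces $J(x)^\ast\sum_{x'\in X}(K(x')\ket{\psi}\otimes\ket{x'})$, which is precisely the left-hand side of the first identity of the preceding proposition with $\ket{\psi(x')}$ taken to be $K(x')\ket{\psi}$. That identity immediately returns $K(x)\ket{\psi}$, and since $\ket{\psi}$ is arbitrary we obtain $J(x)^\ast W_\mathcal{K}=K(x)$ as operators.

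The computation is routine, so I do not expect a serious obstacle; the only point genuinely demanding care is the \emph{direction} of the isometry condition. What must be checked is $W_\mathcal{K}^\ast W_\mathcal{K}=\matid$, not $W_\mathcal{K}W_\mathcal{K}^\ast=\matid$: the latter would make $W_\mathcal{K}$ co-isometric and generally fails, since $W_\mathcal{K}W_\mathcal{K}^\ast$ acts nontrivially across the $l^2(X)$ factor (as one sees from the structure behind identity (\ref{frm::prjdelta})). Concretely, I would keep the inner-product convention fixed (conjugate-linear in the first slot) so that the factors appear in the order $K(x')^\ast K(x')$ needed to match (\ref{frm::comp}); reversing the convention would misplace the adjoints and obscure where completeness is used.
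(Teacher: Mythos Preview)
Your proof is correct and essentially matches the paper's: both collapse the double sum via the orthonormality of $\{\ket{x}\}$ in $l^2(X)$ and then apply the completeness condition (\ref{frm::comp}) to obtain the isometry, and both verify (\ref{eqn::krestore}) by directly applying $J(x)^\ast$ to $W_\mathcal{K}\ket{\psi}$. The only cosmetic difference is that the paper fixes an orthonormal basis of $\mathcal{H}_n$ and computes $W_\mathcal{K}^\ast W_\mathcal{K}$ as an explicit operator sum, whereas you check preservation of inner products directly without choosing a basis.
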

\begin{proof}
Let $\ket{0},\dots,\ket{n-1}$ be some orthonormal basis in $\mathcal{H}_n$, then for any $k=0,\dots,n-1$ from (\ref{eqn::iso})
we have
\[
W_\mathcal{K}\ket{k}=\sum\limits_{x'\in X}(K(x')\ket{k}\otimes\ket{x'}).
\]
Hence,
\[
W_\mathcal{K}=\sum\limits_{k=0}^{n-1}\sum\limits_{x'\in X}(K(x')\ket{k}\otimes\ket{x'})\bra{k}
\]
and
\[
W_\mathcal{K}^\ast=\sum\limits_{k=0}^{n-1}\sum\limits_{x'\in X}\ket{k}(\bra{k}K(x')^\ast\otimes\bra{x'}).
\]
Therefore,
\begin{multline*}
W_\mathcal{K}^\ast W_\mathcal{K}=\\
\left(\sum\limits_{l=0}^{n-1}\sum\limits_{x''\in X}\ket{l}(\bra{l}K(x'')^\ast\otimes\bra{x''})\right)\cdot
\left(\sum\limits_{k=0}^{n-1}\sum\limits_{x'\in X}(K(x')\ket{k}\otimes\ket{x'})\bra{k}\right)=\\
\sum\limits_{k,l=0}^{n-1}\sum\limits_{x',x''\in X}\ket{l}\bra{l}K(x'')^\ast K(x')\ket{k}\braket{x''}{x'}\bra{k} =\\
\sum\limits_{k,l=0}^{n-1}\sum\limits_{x'\in X}\ket{l}\bra{l}K(x')^\ast K(x')\ket{k}\bra{k}.
\end{multline*}
Using the completeness condition one can obtain that
\[
W_\mathcal{K}^\ast W_\mathcal{K}=\sum\limits_{k,l=0}^{n-1}\ket{l}\braket{l}{k}\bra{k}=\sum\limits_{k=0}^{n-1}\ket{k}\bra{k}=
\matid.
\]
The last equation ensures that $W_\mathcal{K}$ is an isometric operator.\\
Equation (\ref{eqn::krestore}) is proved by the direct calculation:
\[
J(x)^\ast W_\mathcal{K}\ket{\psi}=J(x)^\ast\sum\limits_{x'\in X}(K(x')\ket{\psi}\otimes\ket{x'})=K(x)\ket{\psi}.
\]
Proof is complete\qed
\end{proof}

Using Proposition \ref{prp::kraus2iso} one can rewrite formulae (\ref{frm::prob}) and (\ref{frm::effect}) in the following way:
\begin{align}
\prob{x}{\rho}=\trace{\rho W_\mathcal{K}^\ast(\matid\otimes\ket{x}\bra{x})W_\mathcal{K}}
	\tag{\ref{frm::prob}$'$}\label{frm::prob_bis}\\
\effect{\rho}{x}=\frac{J(x)^\ast W_\mathcal{K}\rho W_\mathcal{K}^\ast J(x)}
	{\trace{\rho W_\mathcal{K}^\ast(\matid\otimes\ket{x}\bra{x})W_\mathcal{K}}}
		\tag{\ref{frm::effect}$'$}\label{frm::eff_bis}
\end{align}

Now we claim that this construction can be inverted.

Really, let $\mathcal{H}_n$ be a state space for an $n$-level quantum system, $X$ be a finite set of outcomes, and
$W\colon\mathcal{H}_n\rightarrow\mathcal{H}_n\otimes l^2(X)$ be an isometric operator.

Let's define a family $\mathcal{K}=\{K(x)\colon x\in X\}$ of operators on the space $\mathcal{H}_n$ by the formula
\begin{equation}\label{frm::kraus}
K(x)=J(x)^\ast W.
\end{equation}
\begin{proposition}\label{prp::iso2kraus}
Let $\mathcal{H}_n$ be a state space for an $n$-level quantum system, $X$ be a finite set of outcomes,
$W\colon\mathcal{H}_n\rightarrow\mathcal{H}_n\otimes l^2(X)$ be an isometric operator, and $\mathcal{K}=\{K(x)\colon x\in X\}$ be
the family of operators which is defined by formula {\rm(\ref{frm::kraus});} then
\begin{enumerate}
	\item
		$\mathcal{K}$ satisfies the completeness condition and, therefore,
		it is a Kraus' family; 
	\item
		$W_\mathcal{K}=W$. 
\end{enumerate}
\end{proposition}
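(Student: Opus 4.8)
The plan is to read off both assertions directly from the three identities governing the family $\{J(x)\colon x\in X\}$ (established by direct calculation just above) together with the isometry hypothesis $W^\ast W=\matid$. The guiding observation is that $J(x)^\ast$ acts as a projection onto the $\ket{x}$-slice of a vector in $\mathcal{H}_n\otimes l^2(X)$, so that setting $K(x)=J(x)^\ast W$ simply records the components of $W$ relative to the standard basis of $l^2(X)$.

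First I would prove the completeness condition. Substituting the definition (\ref{frm::kraus}) gives $K(x)^\ast K(x)=W^\ast J(x)J(x)^\ast W$, and the identity (\ref{frm::prjdelta}) taken with coinciding indices yields $J(x)J(x)^\ast=\matid\otimes\ket{x}\bra{x}$. Summing over $x\in X$ and invoking $\sum_{x\in X}\ket{x}\bra{x}=\matid$ (completeness of the standard basis of $l^2(X)$) collapses the inner factor to the identity on $\mathcal{H}_n\otimes l^2(X)$, whence $\sum_{x\in X}K(x)^\ast K(x)=W^\ast W$. The isometry of $W$ then closes the argument, since $W^\ast W=\matid$; this is exactly the completeness condition (\ref{frm::comp}), so $\mathcal{K}$ is a Kraus' family.

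For the second assertion I would verify $W_\mathcal{K}=W$ by evaluating both operators on an arbitrary $\ket{\psi}\in\mathcal{H}_n$. Writing $W\ket{\psi}=\sum_{x'\in X}(\ket{\phi_{x'}}\otimes\ket{x'})$ with suitable $\ket{\phi_{x'}}\in\mathcal{H}_n$ — legitimate because $\{\ket{x'}\colon x'\in X\}$ is an orthonormal basis of $l^2(X)$ — the first of the three $J$-identities gives $K(x')\ket{\psi}=J(x')^\ast W\ket{\psi}=\ket{\phi_{x'}}$. Inserting this into the definition (\ref{eqn::iso}) of $W_\mathcal{K}$ returns $W_\mathcal{K}\ket{\psi}=\sum_{x'\in X}(\ket{\phi_{x'}}\otimes\ket{x'})=W\ket{\psi}$, and since $\ket{\psi}$ was arbitrary the two operators agree.

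Neither part poses a genuine difficulty; both reduce to a short computation once the $J$-identities are available. The only step demanding mild care is the bookkeeping in the second part: one must recognise that decomposing $W\ket{\psi}$ along the basis $\{\ket{x'}\}$ and re-extracting its components via $J(x')^\ast$ form an exact inverse pair, so that passing from $W$ to $\mathcal{K}$ and back to $W_\mathcal{K}$ loses no information. This round-trip is precisely what makes (\ref{frm::kraus}) and (\ref{eqn::iso}) mutually inverse correspondences between Kraus' families and isometries.
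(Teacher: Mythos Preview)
Your proof is correct and follows essentially the same route as the paper. Part~1 is identical to the paper's computation; for part~2 the paper writes $K(x)\ket{\psi}\otimes\ket{x}=J(x)J(x)^\ast W\ket{\psi}$ and sums using (\ref{frm::prjdelta}) directly, whereas you first expand $W\ket{\psi}$ into its $\ket{x'}$-components and then reassemble --- a cosmetic difference only.
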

\begin{proof}
To prove the completeness condition let's calculate the left side of (\ref{frm::comp}) using (\ref{frm::prjdelta}) and the isometry property 
\begin{multline*}
\sum\limits_{x\in X}K(x)^\ast K(x)=\\
\sum\limits_{x\in X}W^\ast J(x)J(x)^\ast W=\sum\limits_{x\in X}W^\ast(\matid\otimes\ket{x}\bra{x})W=W^\ast W=\matid.
\end{multline*}
To prove the second statement let's calculate using (\ref{frm::prjdelta})
\begin{multline*}
W_\mathcal{K}\ket{\psi} =\\
\sum\limits_{x\in X}(K(x)\ket{\psi}\otimes\ket{x})=\sum\limits_{x\in X}J(x)K(x)\ket{\psi}=
\sum\limits_{x\in X}J(x)(J^\ast(x)W)\ket{\psi}=\\
\sum\limits_{x\in X}(J(x)J^\ast(x))W\ket{\psi}=\sum\limits_{x\in X}(\matid\otimes\ket{x}\bra{x})W\ket{\psi}=W\ket{\psi}.
\end{multline*}
Proof is complete\qed
\end{proof}

Proposition \ref{prp::kraus2iso} and \ref{prp::iso2kraus}, formulae (\ref{frm::prob_bis}) and (\ref{frm::eff_bis}) substantiate
replacing the Kraus' families by the corresponding isometric operators under studying the interaction of
quantum systems with classical systems.
This replacing leads us to unification of Postulate 3 and Postulate 4.
To stress such unification we will say that an isomeric operator describes the quantum operation by formulae (\ref{frm::prob_bis}) and (\ref{frm::eff_bis}).
\begin{definition}
Let $\mathcal{H}_n$ be a state space of an $n$-level quantum system, $X$ be a finite set of outcomes, then isometric operators $W_1,W_2\colon\mathcal{H}_n\rightarrow\mathcal{H}_n\otimes l^2(X)$ are called equivalent if for all $x\in X$ and for any density
operator $\rho$ the following equalities are true
\begin{align}
	\trace{\rho W_1^\ast(\matid\otimes\ket{x}\bra{x})W_1}&=\trace{\rho W_2^\ast(\matid\otimes\ket{x}\bra{x})W_2},
		\label{frm::probinv}\\
	J(x)^\ast W_1\rho W_1^\ast J(x)&=J(x)^\ast W_2\rho W_2^\ast J(x).\label{frm::effinv}
\end{align}

Classes of this equivalence will be called {\bfseries quantum operations} with a set of outcomes $X$. 
\end{definition}

Easy to see that isometric operators $W_1,W_2\colon\mathcal{H}_n\rightarrow\mathcal{H}_n\otimes l^2(X)$ describe the same quantum
operation if $J(x)^\ast W_2=\mathrm{e}^{i\theta(x)}J(x)^\ast W_1$ for any $\theta\colon X\rightarrow [0,2\pi)$.

We claim that the inverse statement is true too.
\begin{theorem}\label{thr::eqv}
Let $\mathcal{H}_n$ be a state space of an $n$-level quantum system, $X$ be a finite set of outcomes, $W_1,W_2\colon\mathcal{H}_n\rightarrow\mathcal{H}_n\otimes l^2(X)$ be equivalent isometric operators then
$J(x)^\ast W_2=\mathrm{e}^{i\theta(x)}J(x)^\ast W_1$ for some $\theta\colon X\rightarrow [0,2\pi)$. 
\end{theorem}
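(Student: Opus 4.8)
The plan is to reduce the two equivalence conditions to a single statement about the associated Kraus operators and then to extract the phase by a superposition argument. First I would set $K_1(x) = J(x)^\ast W_1$ and $K_2(x) = J(x)^\ast W_2$, which by Proposition \ref{prp::iso2kraus} are the Kraus families of $W_1$ and $W_2$. Using the identity $\matid\otimes\ket{x}\bra{x} = J(x)J(x)^\ast$ from (\ref{frm::prjdelta}), condition (\ref{frm::effinv}) rewrites as $K_1(x)\rho K_1(x)^\ast = K_2(x)\rho K_2(x)^\ast$ for every density operator $\rho$ and every $x$; taking the trace and using its cyclicity shows that condition (\ref{frm::probinv}) is automatically implied, so the whole argument rests on (\ref{frm::effinv}).

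Next, fix $x$ and abbreviate $A = K_1(x)$, $B = K_2(x)$. Feeding the pure states $\rho = \ket{\psi}\bra{\psi}$ into the rewritten condition gives $(A\ket{\psi})(A\ket{\psi})^\ast = (B\ket{\psi})(B\ket{\psi})^\ast$ for every unit vector $\ket{\psi}$. Two rank-one positive operators coincide exactly when their generating vectors agree up to a unimodular scalar, so I obtain $A\ket{\psi} = 0 \iff B\ket{\psi} = 0$ and, whenever $A\ket{\psi}\neq 0$, a phase $c(\psi)$ with $|c(\psi)|=1$ and $B\ket{\psi}=c(\psi)A\ket{\psi}$. In particular $\ker A = \ker B$, and by linearity $c$ depends only on the ray through $\ket{\psi}$, hence extends to all nonzero vectors of the support $(\ker A)^\perp$.

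The heart of the proof is to show that $c$ is constant on the support. For two support vectors $\ket{\psi_1},\ket{\psi_2}$ with $A\ket{\psi_1}, A\ket{\psi_2}$ linearly independent I would apply $B$ to $\ket{\psi_1}+\ket{\psi_2}$: linearity gives $c(\psi_1)A\ket{\psi_1}+c(\psi_2)A\ket{\psi_2} = c(\psi_1+\psi_2)\big(A\ket{\psi_1}+A\ket{\psi_2}\big)$, and independence forces $c(\psi_1)=c(\psi_2)$. The remaining case, $A\ket{\psi_2}$ proportional to $A\ket{\psi_1}$, is handled by injectivity of $A$ on the support, which makes $\ket{\psi_1},\ket{\psi_2}$ themselves proportional and the two phases trivially equal. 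Denoting the common value by $\mathrm{e}^{i\theta(x)}$, I conclude $B\ket{\psi}=\mathrm{e}^{i\theta(x)}A\ket{\psi}$ on the support and trivially on $\ker A = \ker B$, hence on all of $\mathcal{H}_n$ by orthogonal decomposition; that is exactly $J(x)^\ast W_2 = \mathrm{e}^{i\theta(x)}J(x)^\ast W_1$. Choosing $\theta(x)\in[0,2\pi)$ for each $x$ (arbitrarily on those $x$ with $K_1(x)=0$, where both sides vanish) produces the required map $\theta\colon X\to[0,2\pi)$.

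I expect the main obstacle to be precisely the passage from a pointwise phase $c(\psi)$ to a single global phase: the rank-one comparison only yields one phase per vector, and ruling out dependence on $\ket{\psi}$ requires the linearity step together with careful treatment of the kernel and of linearly dependent images.
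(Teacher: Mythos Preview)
Your proof is correct and follows essentially the same route as the paper: extract a per-vector phase from the rank-one identity $K_1(x)\ket{\psi}\bra{\psi}K_1(x)^\ast = K_2(x)\ket{\psi}\bra{\psi}K_2(x)^\ast$, then use superpositions to show that phase does not depend on $\ket{\psi}$. The paper carries this out in a fixed orthonormal basis (setting $\ket{\omega_k^{(s)}(x)}=K_s(x)\ket{k}$ and comparing $\ket{\omega_k^{(s)}(x)}\bra{\omega_l^{(s)}(x)}$ for varying $k,l$), whereas you do it basis-free, but the underlying argument is the same.
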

\begin{proof}
It is evident, that each isometric operator $W_s\colon\mathcal{H}_n\rightarrow\mathcal{H}_n\otimes l^2(X)$, where $s=1,2$, can be represented by the formula
\begin{equation}\label{frm::isovrep}
W_s=\sum\limits_{x\in X}\sum\limits_{k=0}^{n-1}(\ket{\omega_k^{(s)}(x)}\otimes\ket{x})\bra{k},
\end{equation}
where $\{\ket{0},\dots,\ket{n-1}\}$ is an orthonormal basis in $\mathcal{H}_n$ and
$\ket{\omega_k^{(s)}(x)}=J(x)^\ast W_s\ket{k}$ for $k=0,\dots,n-1$ and $x\in X$.

Using representation (\ref{frm::isovrep}) we calculate $\mathrm{Pr}(x\mid \ket{k}\bra{k})$ for $W_s$ where $s=1,2$.
\begin{multline*}
\mathrm{Pr}(x\mid \ket{k}\bra{k})=\\
\trace{\ket{k}\bra{k}W_s^\ast(\matid\otimes\ket{x}\bra{x})W_s}=\bra{k}W_s^\ast(\matid\otimes\ket{x}\bra{x})W_s)\ket{k}=\\
\bra{k}W_s^\ast(\matid\otimes\ket{x}\bra{x})\sum\limits_{x'\in X}\sum\limits_{l=0}^{n-1}
(\ket{\omega_l^{(s)}(x')}\otimes\ket{x'})\braket{l}{k}=
\bra{k}W_s^\ast(\ket{\omega_k^{(s)}(x)}\otimes\ket{x})=\\
\bra{k}\sum\limits_{x'\in X}\sum\limits_{l=0}^{n-1}\ket{l}(\bra{\omega_l^{(s)}(x')}\otimes\bra{x'})
(\ket{\omega_k^{(s)}(x)}\otimes\ket{x})=
\|\omega_k^{(s)}\|^2.
\end{multline*}
Using this and identity (\ref{frm::probinv}) one can derive that for all $k=0,\dots,n-1$ and $x\in X$ the next equality holds
\begin{equation}\label{frm::normeqv}
\|\omega_k^{(1)}(x)\|^2=\|\omega_k^{(2)}(x)\|^2.
\end{equation}

Let $I_s(x)=\{k\mid 0\leq k<n, \|\omega_k^{(s)}(x)\|\neq 0\}$ for each $x\in X$ and $s=0,1$.
From (\ref{frm::normeqv}) it follows that $I_1(x)=I_2(x)$, hence, we can denote this set by $I(x)$.

From (\ref{frm::effinv}) one can derive that for all $x\in X$ and $k\in I(x)$
\begin{equation}\label{eqv::temp}
\ket{\omega_k^{(2)}(x)}\bra{\omega_k^{(2)}(x)}=\ket{\omega_k^{(1)}(x)}\bra{\omega_k^{(1)}(x)}.
\end{equation}
The next equality is obtained by multiplying equality (\ref{eqv::temp}) from left by $\bra{\omega_k^{(1)}(x)}$ and from right by
$\ket{\omega_k^{(1)}(x)}$ and using equality (\ref{frm::normeqv}):
\begin{equation}\label{eqv::temp1}
|\braket{\omega_k^{(2)}(x)}{\omega_k^{(1)}(x)}|^2=\|\omega_k^{(2)}(x)\|^2\cdot\|\omega_k^{(1)}(x)\|^2.
\end{equation}
From the (\ref{eqv::temp1}) and (\ref{frm::normeqv}) it follows that for all $x\in X$ and $k\in I(x)$
\begin{equation}
\ket{\omega_k^{(2)}(x)}=\mathrm{e}^{i\theta(k,x)}\ket{\omega_k^{(1)}(x)},\mbox{ where }0\leq\theta(k,x)<2\pi.
\end{equation}

Further, from (\ref{frm::effinv}) it follows that for all $x\in X$ and $k,l\in I(x)$ the next equality is true:
\[
\ket{\omega_k^{(2)}(x)}\bra{\omega_l^{(2)}(x)}=\ket{\omega_k^{(1)}(x)}\bra{\omega_l^{(1)}(x)}.
\]
Therefore,
\[
\mathrm{e}^{i(\theta(k,x)-\theta(l,x))}\ket{\omega_k^{(1)}(x)}\bra{\omega_l^{(1)}(x)}=
\ket{\omega_k^{(1)}(x)}\bra{\omega_l^{(1)}(x)},
\]
and $\mathrm{e}^{i(\theta(k,x)-\theta(l,x))}=1$.

In summary, we obtain the next equality for all $x\in X$ and $k\in I(x)$
\begin{equation}\label{eqv::temp2}
\ket{\omega_k^{(2)}(x)}=\mathrm{e}^{i\theta(x)}\ket{\omega_k^{(1)}(x)}.
\end{equation}
Using (\ref{eqv::temp2}) for $x\in X$, $k\in I(x)$ and the equality $\ket{\omega_l^{(2)}(x)}=\ket{\omega_l^{(1)}(x)}=0$
for $l\in \{0,\dots, n-1\}\setminus I(x)$ one can get that equality (\ref{eqv::temp2}) is true for all $0\leq k<n$.

Therefore, $J(x)^\ast W_2=\mathrm{e}^{i\theta(x)}J(x)^\ast W_1$ for some $\theta\colon X\rightarrow [0,2\pi)$\qed
\end{proof}
\begin{corollary}
Two isometric operators $W_1,W_2\colon\mathcal{H}_n\rightarrow\mathcal{H}_n\otimes l^2(X)$ define the same quantum operation
if and only if for some $\theta\colon X\rightarrow [0,2\pi)$ the following equality holds
\[
W_2=\Theta W_1\mbox{, where }\Theta=\matid\otimes\sum_{x\in X}e^{i\theta(x)}\ket{x}\bra{x}.
\]
\end{corollary}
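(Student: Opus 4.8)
The plan is to reduce the corollary to Theorem~\ref{thr::eqv} together with the converse remark preceding it, so that the entire statement becomes the assertion that the pointwise condition $J(x)^\ast W_2 = \mathrm{e}^{i\theta(x)}J(x)^\ast W_1$ (for all $x\in X$) is equivalent to the single operator identity $W_2 = \Theta W_1$. The bridge between the ``pointwise in $x$'' form and the ``global operator'' form will be a resolution-of-identity formula. First I would record from (\ref{frm::prjdelta}) that $J(x)J(x)^\ast = \matid\otimes\ket{x}\bra{x}$; since $\{\ket{x}\colon x\in X\}$ is an orthonormal basis of $l^2(X)$, summing over $x$ gives $\sum_{x\in X}J(x)J(x)^\ast = \matid\otimes\sum_{x\in X}\ket{x}\bra{x} = \matid$, the identity on $\mathcal{H}_n\otimes l^2(X)$. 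Consequently every operator $W\colon\mathcal{H}_n\to\mathcal{H}_n\otimes l^2(X)$ admits the reconstruction $W = \sum_{x\in X} J(x)\bigl(J(x)^\ast W\bigr)$, which is the technical engine of the argument.

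For the forward implication, I would assume that $W_1$ and $W_2$ define the same quantum operation, i.e.\ they are equivalent isometric operators. Theorem~\ref{thr::eqv} then supplies a phase function $\theta\colon X\to[0,2\pi)$ with $J(x)^\ast W_2 = \mathrm{e}^{i\theta(x)}J(x)^\ast W_1$ for every $x$. Substituting this into the reconstruction formula for $W_2$ yields $W_2 = \sum_{x} J(x)\,\mathrm{e}^{i\theta(x)}J(x)^\ast W_1 = \bigl(\sum_{x}\mathrm{e}^{i\theta(x)}J(x)J(x)^\ast\bigr)W_1$, and using $J(x)J(x)^\ast = \matid\otimes\ket{x}\bra{x}$ the bracketed operator is exactly $\Theta=\matid\otimes\sum_{x}\mathrm{e}^{i\theta(x)}\ket{x}\bra{x}$. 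Hence $W_2 = \Theta W_1$.

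For the converse, I would assume $W_2 = \Theta W_1$ and, invoking the remark preceding Theorem~\ref{thr::eqv}, reduce the equivalence conditions (\ref{frm::probinv}) and (\ref{frm::effinv}) to verifying $J(x)^\ast W_2 = \mathrm{e}^{i\theta(x)}J(x)^\ast W_1$. The key computation is $J(x)^\ast\Theta = \mathrm{e}^{i\theta(x)}J(x)^\ast$, obtained by evaluating both sides on a decomposable vector $\ket{\psi}\otimes\ket{y}$: the defining property of $J(x)^\ast$ isolates the $y=x$ component and multiplies it by $\mathrm{e}^{i\theta(x)}$. Composing with $W_1$ on the right gives the desired phase relation. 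I would also note in passing that $\Theta$ is unitary (it is $\matid$ tensored with a diagonal operator whose entries have modulus one), so $W_2 = \Theta W_1$ is automatically isometric whenever $W_1$ is, keeping us inside the class of isometric operators.

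Since both implications lean directly on already-proved results, I do not expect a genuine obstacle; the only point demanding care is the bookkeeping in the resolution-of-identity step — ensuring that the sum $\sum_{x}J(x)J(x)^\ast$ is assembled as the full identity on the product space and that it slides past the scalar phases $\mathrm{e}^{i\theta(x)}$ correctly, so that the factorisation through $\Theta$ is exact.
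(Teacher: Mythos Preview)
Your argument is correct and is exactly the natural way to justify the corollary: combine Theorem~\ref{thr::eqv} (and the remark preceding it) with the resolution of the identity $\sum_{x\in X}J(x)J(x)^\ast=\matid$ obtained from~(\ref{frm::prjdelta}). The paper itself gives no proof at all for this corollary, treating it as an immediate consequence of Theorem~\ref{thr::eqv}; your write-up simply supplies the details the authors leave implicit.
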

\section{Abstract Quantum Automata}
Now we describe some class of mathematical models for quantum information processes.
This class we call the class of abstract quantum automata.
\subsection{The notion of an abstract quantum automaton}
\begin{definition}
Let $\mathcal{H}_m (m>1)$ be a state space of an $m$-level quantum system, and let 
$\mathcal{G}=(\mathcal{N},\mathcal{A},\dom,\codom,n_0,F)$ be a control graph.
Suppose that each non-terminal node $n$ of the graph $\mathcal{G}$ is connected with a quantum operation for which $\mathcal{H}_m$
is the state space, $\mathrm{Out}(n)$ is the outcomes set, and
$W_n\colon\mathcal{H}_m\rightarrow\mathcal{H}_m\otimes l^2(\mathrm{Out}(n))$ is an isometric operator describing the operation.
Then the tuple
\[
(\mathcal{H}_m,\mathcal{G},\{W_n\mid n\in\mathcal{N}\setminus F\})
\]
is called an {\bfseries abstract quantum automaton}.
\end{definition}

The next definition describes the set of runs for an abstract quantum automaton similarly to Definition \ref{dfn::rrun}.
\begin{definition}
Let $(\mathcal{H}_m,\mathcal{G},\{W_n\mid n\in\mathcal{N}\setminus F\})$ be an abstract quantum automaton where the control graph
$\mathcal{G}$ is equal to $(\mathcal{N},\mathcal{A},\dom,\codom,n_0,F)$.
Then a partial map $C\colon\mathbb{N}\dashrightarrow\mathcal{N}\times\mathfrak{S}$ is called a {\bfseries run} of the automaton if
it satisfies the following conditions
\begin{itemize}
	\item[{\rm(i)}]
		$C(0)=(n_0,\rho)\mbox{ where }\rho\in\mathfrak{S}_m${\rm;}
	\item[{\rm(ii)}]
		if $C(t)\neq\emptyset$ for some $t\in \mathbb{N}$ then $C(t')\neq\emptyset$ for all $t'\in \mathbb{N}$ such that $t'<t$
		{\rm;}
	\item[{\rm(iii)}]
		if $\emptyset\neq C(t+1)=(n',\rho')$ for some $t\in \mathbb{N}$ and $C(t)=(n,\rho)$ then there exists
		$a\in\mathrm{Out}(n)$ such that $\prob{a}{n,\rho}=\trace{\rho W_n^\ast(\matid\otimes\ket{a}\bra{a})W_n}>0$,
		$n'=\codom(a)$, and $\rho'=\effect{\rho}{n,a}=\dfrac{J(a)^\ast W_n\rho W_n^\ast J(a)}{\prob{a}{n,\rho}}${\rm;} 
	\item[{\rm(iv)}]
		if $\emptyset\neq C(t)=(n,\rho)$ for $n\in F$ and $\rho\in\mathfrak{S}_m$ then $C(t+1)=\emptyset$.
\end{itemize}
\end{definition}
\begin{example}
Let's consider a quantum information process which sets a qubit (2-level quantum system) into the state $\ket{0}\bra{0}$.

Evidently, that this problem can not be solved by any unitary transformation.

We shall specify an abstract quantum automaton that does it.
The control graph of the automaton is shown in Fig. \ref{fig::qucleaner}.
\begin{figure}[h]
\centering
\includegraphics[width=0.7\textwidth]{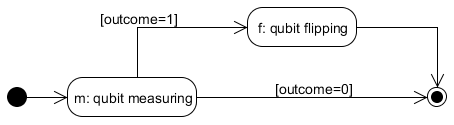}
\caption{Qubit cleaner}
\label{fig::qucleaner}
\end{figure}

As one can see $\mathrm{Out}(m)=\{0,1\}$.
Let's define $W_m\colon\mathcal{H}_2\rightarrow\mathcal{H}_2\otimes l^2(\{0,1\})$ by the formula
\[
W_m\ket{\psi}=\ket{0}\braket{0}{\psi}\otimes\ket{0}+\ket{1}\braket{1}{\psi}\otimes\ket{1}.
\]
Further, $\mathrm{Out}(f)$ is a singleton hence $W_f\colon\mathcal{H}_2\rightarrow\mathcal{H}_2$.
Let's define
\[
W_f\ket{\psi}=\ket{0}\braket{1}{\psi}+\ket{1}\braket{0}{\psi}.
\]
Easy to see that for an arbitrary initial state of a qubit its state after handling by the automaton is equal to $\ket{0}\bra{0}$.

Therefore, we have built the abstract quantum automaton that specifies the process of cleaning a qubit.
\end{example}
\begin{example}
This example deals with preparing an entangled pair of qubits.

We shall specify an abstract quantum automaton that does it.
The control graph of the automaton is shown in Fig. \ref{fig::entanglement}.
\begin{figure}[h]
\centering
\includegraphics[width=0.8\textwidth]{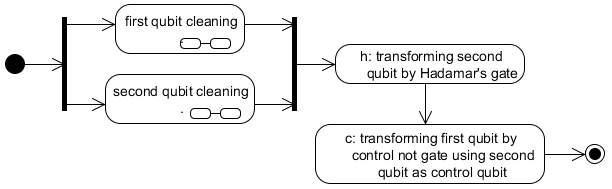}
\caption{Preparing an entangled pair of qubits}
\label{fig::entanglement}
\end{figure}

Let's define $W_h\colon\mathcal{H}_2\otimes\mathcal{H}_2\rightarrow\mathcal{H}_2\otimes\mathcal{H}_2$ by the formulae
\begin{eqnarray}
W_h(\ket{\psi}\otimes\ket{0})=\ket{\psi}\otimes\dfrac{1}{\sqrt{2}}(\ket{0}+\ket{1}),\notag\\
W_h(\ket{\psi}\otimes\ket{1})=\ket{\psi}\otimes\dfrac{1}{\sqrt{2}}(\ket{0}-\ket{1}).\notag
\end{eqnarray}
and define $W_c\colon\mathcal{H}_2\otimes\mathcal{H}_2\rightarrow\mathcal{H}_2\otimes\mathcal{H}_2$ by the formulae
\begin{eqnarray}
W_c(\ket{\psi}\otimes\ket{0})=&\ket{\psi}\otimes\ket{0},\notag\\
W_c(\ket{\psi}\otimes\ket{1})=&\ket{1}\braket{\psi}{0}\otimes\ket{1}+\ket{0}\braket{\psi}{1}\otimes\ket{1}.\notag
\end{eqnarray}
Easy to see that for an arbitrary initial state of a qubit pair its state after handling by the automaton is equal to
$\dfrac12(\ket{0}\otimes\ket{0}+\ket{1}\otimes\ket{1})(\bra{0}\otimes\bra{0}+\bra{1}\otimes\bra{1})$.
\end{example}

These examples demonstrate that modelling of quantum information processes by abstract quantum automata allows to describe
processes of initial preparation of a quantum memory for quantum computing devices.
\subsection{Quantum teleportation as an abstract quantum automaton}
To complete the paper let's consider the quantum teleportation process and let's show that it can be described by an abstract
quantum automaton.

Quantum teleportation is a process by which a qubit state can be transmitted exactly from one location to another, without the
qubit being transmitted through the intervening space.
This phenomenon has been confirmed experimentally \cite{bib::bou,bib::bos}.

The control graph of the automaton is shown in Fig. \ref{fig::teleportation}.
\begin{figure}[h]
\centering
\includegraphics[width=0.85\textwidth]{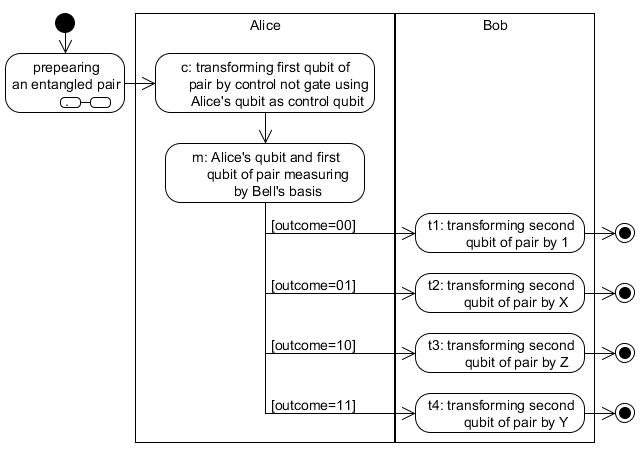}
\caption{Teleportation}
\label{fig::teleportation}
\end{figure}

Let's define
$W_c\colon\mathcal{H}_2\otimes\mathcal{H}_2\otimes\mathcal{H}_2\rightarrow
\mathcal{H}_2\otimes\mathcal{H}_2\otimes\mathcal{H}_2$, where the first qubit is Alice's qubit, the second and the third qubits
are the first and the second qubits of the entangled pair respectively, by the formulae
\begin{eqnarray}
W_c(\ket{0}\otimes\ket{0}\otimes\ket{\psi})=&\ket{0}\otimes\ket{0}\otimes\ket{\psi},\notag\\
W_c(\ket{0}\otimes\ket{1}\otimes\ket{\psi})=&\ket{0}\otimes\ket{1}\otimes\ket{\psi},\notag\\
W_c(\ket{1}\otimes\ket{0}\otimes\ket{\psi})=&\ket{1}\otimes\ket{1}\otimes\ket{\psi},\notag\\
W_c(\ket{1}\otimes\ket{1}\otimes\ket{\psi})=&\ket{1}\otimes\ket{0}\otimes\ket{\psi}.\notag
\end{eqnarray}
Further, $\mathrm{Out}(m)=\{00,01,10,11\}$ and the corresponding isometric operator is defined by formulae
\begin{eqnarray}
W_m(\ket{0}\otimes\ket{0}\otimes\ket{\psi})=&\ket{0}\otimes\ket{0}\otimes\ket{\psi}\otimes\ket{00},\notag\\
W_m(\ket{0}\otimes\ket{1}\otimes\ket{\psi})=&\ket{0}\otimes\ket{1}\otimes\ket{\psi}\otimes\ket{01},\notag\\
W_m(\ket{1}\otimes\ket{0}\otimes\ket{\psi})=&\ket{1}\otimes\ket{0}\otimes\ket{\psi}\otimes\ket{10},\notag\\
W_m(\ket{1}\otimes\ket{1}\otimes\ket{\psi})=&\ket{1}\otimes\ket{1}\otimes\ket{\psi}\otimes\ket{11}.\notag
\end{eqnarray}
By direct calculation one can prove that the initial state $\ket{\psi}\bra{\psi}\otimes\rho$ for an arbitrary
$\rho\in\mathfrak{S}_4$ is transformed by the automaton into the state
$\dfrac14(\ket{0}\bra{0}\otimes\ket{0}\bra{0}+\ket{0}\bra{0}\otimes\ket{1}\bra{1}+\ket{1}\bra{1}\otimes\ket{0}\bra{0}+
\ket{1}\bra{1}\otimes\ket{1}\bra{1})\otimes\ket{\psi}\bra{\psi}$.
\section*{Conclusion}
Summarising the above we can conclude:
\begin{itemize}
	\item[$-$]
		our attempt to solve the Yu. Manin's problem led us towards the notion of an abstract quantum automaton;
	\item[$-$]
		this notion is based on a computational model known as a machine of A. Kolmogorov and V. Uspensky;
	\item[$-$]
		abstract quantum automata can be used for formal specification of quantum information processes including non-invertible
		processes like qubit cleaning, entangled pair preparing and quantum teleportation.
\end{itemize}

The authors know that quantum algorithms can be specified by using abstract quantum automata but corresponding results are not
given in the paper because they are cumbersome.


\begin{thebibliography}{}

\bibitem{bib::bou}
Bouwmeester, D., Pan, J.--W., Mattle, K., Eible, M., Weinfurter, H., Zeilinger, A.:
Experimental quantum teleportation.
Nature, 390, 575 -- 579 (1997)

\bibitem{bib::bos}
Boschi, D., Branca, S., De Martini, F., Hardy, L., Popescu, S.:
Experimental Realization of Teleporting an Unknown Pure Quantum State via Dual Classical and Einstein--Podolsky--Rosen Channel.
Phys. Rev. Lett., 80, 1121 -- 1125 (1998)

\bibitem{bib::church}
Church, A.:
An unsolvable problem of elementary number theory.
Amer. J. Math., 58(2), 345 -- 363 (1936)

\bibitem{bib::feynman}
Feynman, R.P.:
Simulating Physics with Computer.
Int. J. Theor. Phys., 21, 467 -- 488 (1982)

\bibitem{bib::feynman_computer}
Feynman, R.P.:
Quantum Mechanical Computers.
Found. Phys., 16, 507 -- 531 (1986)

\bibitem{bib::fortnow}
Fortnow, L., Homer, S.:
A Short History of Computational Complexity.
Bull. EATCS, 80, 95 -- 133 (2003)

\bibitem{bib::gurevich_1}
Gurevich, Y.:
Evolving Algebras 1993: Lipari Guide, E. Börger (ed.)
Specification and Validation Methods.
Oxford University Press, 9 -- 36 (1995)

\bibitem{bib::gurevich_2}
Gurevich, Y.:
Sequential Abstract State Machines capture Sequential Algorithms.
ACM Trans. Comp. Logic, 1, 77 -- 111 (2000)

\bibitem{bib::holevo}
Holevo, A.S.:
Probabilistic and Statistical Aspects of Quantum Theory.
North-Holland Publishing Company, Amsterdam (1982)

\bibitem{bib::kolmogorov}
Kolmogorov, A.N., Uspensky, V.A.:
On the Definition of an Algorithm.
AMS Transl., ser. 2, 21, 217 -- 245 (1963) 

\bibitem{bib::manin}
Manin, Yu.I.:
Computable and Uncomputable (Cybernetics), (in Russian).
Sovetskoe radio, Moscow (1980)

\bibitem{bib::manin_essays}
Manin, Yu.I.:
Mathematics as metaphor: selected essays of Yuri I. Manin.
AMS (2007)

\bibitem{bib::nielsen}
Nielsen, M.A. and Chuang, I.L.:
Quantum Computation and Quantum Information.
Cambridge University Press, Cambridge (2000)

\bibitem{bib::post}
Post, E.L.:
Finite Combinatory Processes -- Formulation 1.
J. Symb. Logics, 1(3), 103 -- 105 (1936) 

\bibitem{bib::turing_machine}
Turing, A.M.:
On computable numbers, with an application to the Entscheidungsproblem.
Proc. Lond. Math. Soc., 2(42), 230 -- 265 (1936)

\bibitem{bib::turing_thesis}
Turing, A.M.:
Computability and $\lambda$-Definability.
J. Symb. Logics, 2(4), 153~-- 163 (1937) 

\bibitem{bib::white}
White, S.:
A Brief History of Computing.
\url{http://trillian.randomstuff.org.uk/\ \textasciitilde stephen/history/} 

\end{thebibliography}
\end{document}